\newcommand{\lref}[1]{Lemma \ref{lem:#1}}
\newcommand{\cref}[1]{Corollary \ref{cor:#1}}
\newcommand{\sref}[1]{Section \ref{sec:#1}}
\newcommand{\fref}[1]{Fig. \ref{fig:#1}}
\newcommand\set[1]{\{ #1\}}
\newcommand{\DTC}{\texttt{DTC}}
\newcommand{\threesat}{\texttt{3SAT}}
\newcommand{\SAT}{\texttt{SAT}}
\newcommand{\pthreesat}{\texttt{P3SAT}}
\newcommand{\bigO}{\operatorname{O}}
\newcommand{\NOT}{\texttt{NOT}}
\newcommand{\AND}{\texttt{AND}}
\newcommand{\true}{\texttt{T}}
\newcommand{\false}{\texttt{F}}
\begin{document}

\title{Domino Tatami Covering is NP-complete}
%


\author{Alejandro Erickson \and  Frank Ruskey}

\institute{Department of Computer Science, University of Victoria, V8W 3P6, Canada}

\maketitle

\begin{abstract}
  A covering with dominoes of a rectilinear region is called
  \emph{tatami} if no four dominoes meet at any point.  We describe a
  reduction from planar~\threesat~to Domino Tatami Covering.  As a
  consequence it is NP-complete to decide whether there is a perfect
  matching of a graph that meets every 4-cycle, even if the graph is
  restricted to be an induced subgraph of the grid-graph.  The gadgets
  used in the reduction were discovered with the help of a
  \SAT-solver.
\end{abstract}

\section{Introduction}
\label{sec:intro}

Imagine that you want to ``pave'' a rectilinear driveway on the
integer lattice using 1 by 2 bricks.  Sometimes this will be possible,
but sometimes not, depending on the shape of the driveway.
Abstractly, a rectilinear driveway $D$ is a connected finite induced
subgraph of the infinite planar grid-graph, and a paving with bricks
corresponds to a perfect matching.  Since $D$ is bipartite, various
network flow algorithms can be used to determine whether there is a
paving in low-order polynomial time.

However, an examination of typical paving patterns reveals that
another constraint is often enforced/desired, probably for both
aesthetic reasons and engineering reasons.  The constraint is that no
four bricks meet at a point.  In some recent papers, this restriction
has come to be known as the tatami constraint, because Japanese tatami
mat layouts often adhere to it.  The question that we wish to address
in this paper is: What is the complexity of determining whether $D$
has a paving also satisfying the tatami constraint?  We will show that
the problem is NP-complete.

A \emph{rectilinear region} is a polyomino which may have holes.  We
describe a polynomial reduction from the NP-complete problem
planar~\threesat~to Domino Tatami Covering (\DTC).  The gadgets used in
the reduction were discovered with the help of a \SAT-solver.

\begin{definition}[Domino Tatami Covering (\DTC)]
  \label{def:dtc}
  ~
  \begin{description}
  \item[INSTANCE:] A rectilinear region $R$, on the integer lattice,
    represented, say, as $n$ line segments joining the corners of the
    polygon.
  \item[QUESTION:] Can $R$ be covered by dominoes such that no four of
    them meet at any one point?
  \end{description}
\end{definition}



Domino tatami coverings have an interesting combinatorial structure,
elucidated for rectangles in \cite{RuskeyWoodcock2009} and further in
\cite{EricksonRuskeySchurch2011}.  The results in these papers, as
well as \cite{EricksonSchurch2012,EricksonRuskey2013a} are
enumerative, whereas in this paper we explore tatami coverings from a
computational perspective.  There is no comprehensive structure
theorem for tatami coverings of rectilinear grids, but evidently much
of the structure is still there, as is illustrated in \fref{tatami}.

\begin{figure}[!ht]
  \centering
  \includegraphics[width=\textwidth]{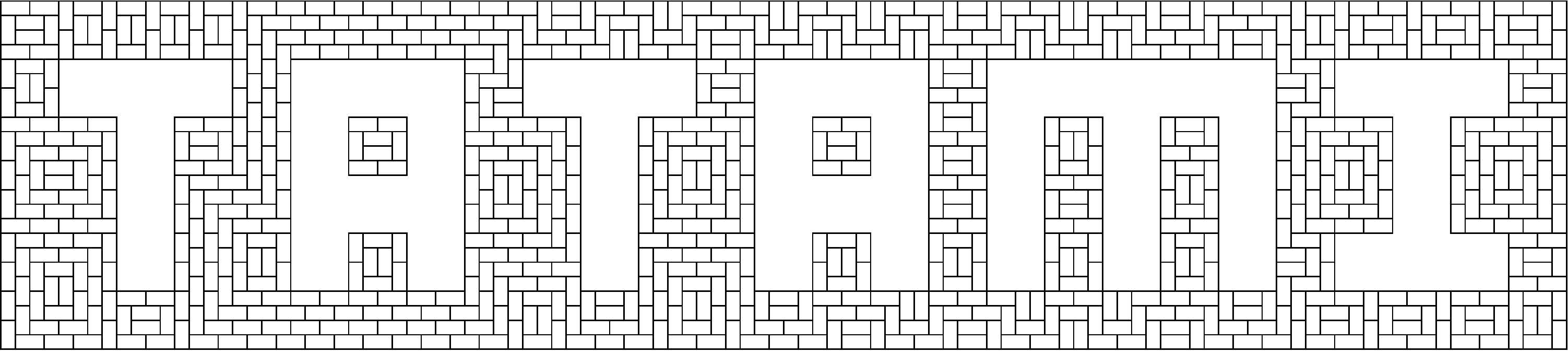}
  \caption{A domino tatami covering of a rectilinear region.  This particular covering was
  produced by a \SAT-solver.}
  \label{fig:tatami}
\end{figure}

There are some previous complexity results about tilings and domino coverings.
Historically, perhaps the first concerned colour-constrained coverings, such as those of
Wang tiles.  It is well known, for example, that covering the $k\times
k$ grid with Wang tiles is NP-complete (\cite{Lewis1978}).  On the
other hand tatami does not appear to be a special case of these, nor
of similar colour restrictions on dominos (e.g. \cite{Biedl2005,WormanWatson2004}).

A more closely related mathematical context is found, instead, among the graph
matching problems discussed by Churchley, Huang, and Zhu, in
\cite{ChurchleyHuangZhu2011}.  In their paper, an \emph{$H$-transverse
  matching} of a graph $G$, is a matching $M$, such that $G-M$ has no
subgraph $H$.  In a tatami covering of the rectilinear grid, $G$ is a
finite induced subgraph of the infinite grid-graph, $H$ is a $4$-cycle, and we
require a perfect matching of the edges.  In fact, if the matching is
not required to be perfect, the problem is polynomial.

\SAT-solvers have been applied to a broad range of industrial and
mathematical problems in the last decade.  Our reduction from
planar~\threesat~uses Minisat (\cite{EenSorensson2004}) to help automate
gadget generation, as was also done by Ruepp and Holzer
(\cite{RueppHolzer2010}).  It is easy to see that instances of other
locally restricted covering problems can be expressed as
satisfiability formulae, which suggests that \SAT-solvers may provide a
methodological applicability in hardness reductions involving those problems.


\section{Preliminaries}
\label{sec:preliminaries}

Let $\phi$ be a CNF formula, with variables $U$, and clauses $C$. The
formula is \emph{planar} if there exists a planar graph $G(\phi)$ with
vertex set $U\cup C$ and edges $\set{u,c}\in E$, where one of the
literals $u$ or $\bar u$ is in the clause $c$.  When the clauses
contain at most three literals, $\phi$ is an instance of \pthreesat,
which is NP-complete (\cite{Lichtenstein1982}).

We construct an instance of \DTC~which emulates a given instance,
$\phi$, of \pthreesat, by replacing the parts of $G(\phi)$ with a
rectilinear region, $R(\phi)$, that can be tatami-covered with
dominoes if and only if $\phi$ is satisfiable.  Let $n=|U\cup C|$.  In
\sref{layout} we show that $R(\phi)$ can be created in $\bigO(n)$
time, and that it fits in a $\bigO(n) \times \bigO(n)$ grid, by using
Rosenstiehl and Tarjan's algorithm (\cite{RosenstiehlTarjan1986}).


\section{Gadgets}
\label{sec:gadgets}


In this section we describe wire, \NOT~gates, and \AND~gates, which form the required
gadgets.  The functionality of our gadgets depends on the coverings of
a certain $8\times 8$ subgrid.

\begin{lemma}
  \label{lem:square}
  Let $R$ be a rectilinear grid, with an $8\times 8$ subgrid, $S$.  If
  a domino crosses the boundary of $S$ in a domino tatami covering of
  $R$, then at least one corner of $S$ is also covered by a domino
  that crosses its boundary.
\end{lemma}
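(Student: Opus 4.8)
The plan is to work from the local form of the tatami condition: a covering is tatami exactly when, at every lattice point interior to $R$, the four incident unit cells are \emph{not} covered by four distinct dominoes; equivalently, at least one domino \emph{straddles} the point, i.e.\ covers two of the four cells meeting there. I would first cut down the case analysis using the four-fold dihedral symmetry of the square $S$: assume without loss of generality that the given crossing domino meets the top edge of $S$, so it is a vertical domino covering one cell $a$ in the top row of $S$ together with the cell of $R$ directly above $S$.

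Next I would show that such a crossing domino forces a diagonal \emph{staircase} of dominoes inside $S$. Since the crossing domino occupies $a$ ``pointing up,'' at the interior lattice point at the lower-left corner of $a$ the two straddles that would use $a$ (the horizontal pair with its left neighbour, and the vertical domino below $a$) are unavailable, so the tatami condition forces, one step down and to the side, either a vertical domino in the neighbouring column or a horizontal domino in the next row; the lower-right corner of $a$ is symmetric. Re-applying the tatami condition at the far corner of each newly pinned domino pins the next one, producing a chain that marches monotonically toward a side of $S$ (down-and-left on one side, down-and-right on the other). This is exactly the ray behaviour established for tatami tilings of rectangles in \cite{RuskeyWoodcock2009}, and because $S$ is only $8\times 8$ each such chain reaches a side after finitely many steps.

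Finally I would analyse what happens when a chain meets a side of $S$. The point of this step is a parity count along the (even-length) edge: the tatami conditions at the boundary lattice points can be discharged by vertical dominoes lying entirely outside $S$, but each such external domino can serve only the single point between its two rows, so external support cannot be used at two consecutive boundary points. Following this alternation forces the staircase to continue as dominoes hugging the edge all the way into a corner, where the corner cell of $S$ has both of its internal straddling partners already occupied and must therefore be covered by a domino poking out of $S$ — the desired corner crossing. The main obstacle is precisely this termination argument: ruling out that a chain dissipates in the interior or caps off at a non-corner boundary point, which hinges on the side length being even and reduces to a bounded, finite local check — the same flavour of verification the paper automates with a \SAT-solver. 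Once single-chain propagation and the edge-parity are in hand, the remaining distinctions (a chain turning, two chains meeting, or a chain meeting a crossing on an adjacent edge) should be routine.
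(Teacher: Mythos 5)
Your opening moves are sound and, in fact, reproduce the mechanism that underlies the result the paper invokes: a domino crossing the boundary of $S$ is a monomino in the induced monomino--domino tatami covering of the $8\times 8$ square, and the tatami condition at lattice points \emph{interior} to $S$ forces a straddling domino at each such point, which propagates diagonal chains away from the monomino. Note, though, that the paper does not redo any of this analysis: its proof converts crossing dominoes to monominoes and then cites exactly the statement you are trying to prove --- every monomino--domino tatami covering of the square with at least one monomino has a monomino in a corner (Lemma 4 of \cite{EricksonSchurch2012} and Corollary 2 of \cite{EricksonRuskeySchurch2011}).

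The genuine gap is your termination step, and it is not the ``bounded, finite local check'' you describe; it is essentially the entire content of the cited results. Two concrete problems. First, once a chain reaches a side of $S$, the only tatami constraints available there live at lattice points on the \emph{boundary} of $S$, and these involve cells outside $S$, about which the lemma assumes nothing: they may be covered arbitrarily by $R$'s covering, or they may not belong to $R$ at all (in the paper's construction the squares deliberately abut the boundary of $R(\phi)$). In the latter case at most three tiles can meet at such a point, the tatami condition is vacuous there, and no straddle --- internal, external, or crossing --- is forced, so your edge alternation has nothing to propagate. Second, even where outside cells do exist, your dichotomy (external vertical domino versus internal straddle) omits the third way a boundary point is discharged: a domino crossing the side at a \emph{non-corner} cell. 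Such a domino is perfectly legal, straddles the two boundary points at the ends of the segment it crosses, and is simply a new monomino in the $S$-view; it absorbs your chain and restarts the argument with no progress measure. Ruling out that all chains terminate in non-corner crossing dominoes, with nothing ever forced at a corner, is precisely the global corner-monomino theorem for the square, which the cited papers establish via their full ray/structure theory and counting arguments, not by a parity count along one edge. To make your proof self-contained you would need either to reproduce that structure theory using interior points only, or to replace the termination step by an exhaustive verification over the (finite) set of $8\times 8$ monomino--domino tatami coverings --- which is feasible, and closer in spirit to the paper's \SAT-assisted methodology, but is not what your sketch provides.
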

\begin{proof}
  Suppose $R$ is covered by dominoes, and consider those dominoes
  which cover $S$.  Such a cover may not be exact, in the sense that a
  domino may cross the boundary of $S$.  If we consider all such
  dominoes to be monominoes within $S$, we obtain a monomino-domino
  covering of $S$.  This covering inherits the tatami restriction from
  the covering of $R$, so it is one of the $8\times 8$ monomino-domino
  coverings enumerated in \cite{EricksonSchurch2012} (and/or
  \cite{EricksonRuskeySchurch2011}).

  The proof of Lemma 4 in \cite{EricksonSchurch2012} (third paragraph)
  states that there is a monomino in at least one corner of $S$ if
  $0<m<n$; Corollary 2 of \cite{EricksonRuskeySchurch2011} states that
  there is a monomino in at least one corner of $S$ if $m=n$ (see
  examples in \fref{monominoSquare}).  This monomino corresponds to a
  domino which crosses the boundary in a corner of $S$, as
  required.\hfill \ensuremath{\Box}
\end{proof}

\begin{figure}[h]
  \centering
  \includegraphics{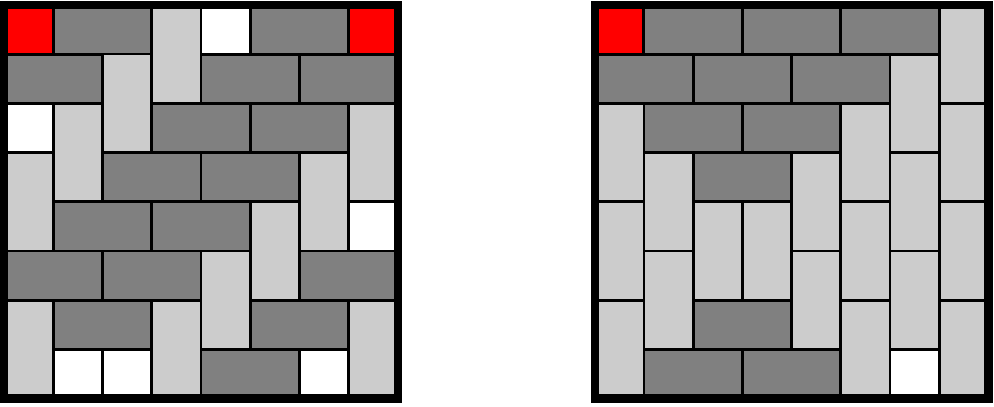}
  \caption{All monomino-domino tatami coverings of the square have at
    least one monomino in their corners
    (see~\cite{EricksonRuskeySchurch2011,EricksonSchurch2012}).  The
    squares in $R(\phi)$ have isolate corners, so these must be
    covered in exactly one of the two ways given by Exercise 7.1.4.215
    in \cite{Knuth2011}, shown in \fref{notgate}.}
  \label{fig:monominoSquare}
\end{figure}

The rectilinear region $R(\phi)$ incorporates a network of $8\times 8$
squares, whose centres reside on a $16\mathbb{Z}~\times~16\mathbb{Z}$
grid, and whose corners form part of the boundary of $R(\phi)$.
\lref{square} implies that no domino may cross their boundaries, and
thus each one must be covered in one of the two ways shown in
\fref{notgate}. (For proofs see \cite{RuskeyWoodcock2009} and Exercise 215, Section
7.1.4 in \cite{Knuth2011}).

The coverings of these squares are related to each other by connecting
regions.  The part of an $8\times 8$ square which borders on a
connector may be covered either by two tiles, denoted by \false~to
signify ``false'', or three tiles, denoted by \true~to signify
``true'' (see \fref{notgate}).  Note that the covering of a square is
not \true~or \false~by itself, because connectors below and beside it
would meet the square at differing interfaces.

A connector, which imposes a relationship between the coverings of a
set of $8\times 8$ squares, is verified by showing that it can be
covered if and only if the relationship is satisfied.  The connectors
we describe were generated with \SAT-solvers, but they are simple enought that
we can verify them by hand, as is done below.

\paragraph{\NOT~gate.} The \NOT~gate interfaces with two $8\times 8$
squares (see \fref{notgate}), and can be covered if and only if these
squares are covered with differing configurations.

\begin{figure}[ht]
  \centering %
  \scriptsize \def\svgscale{0.5} %
  \subfigure[\NOT~gate with \texttt{F} and \texttt{T}
  interfaces.]{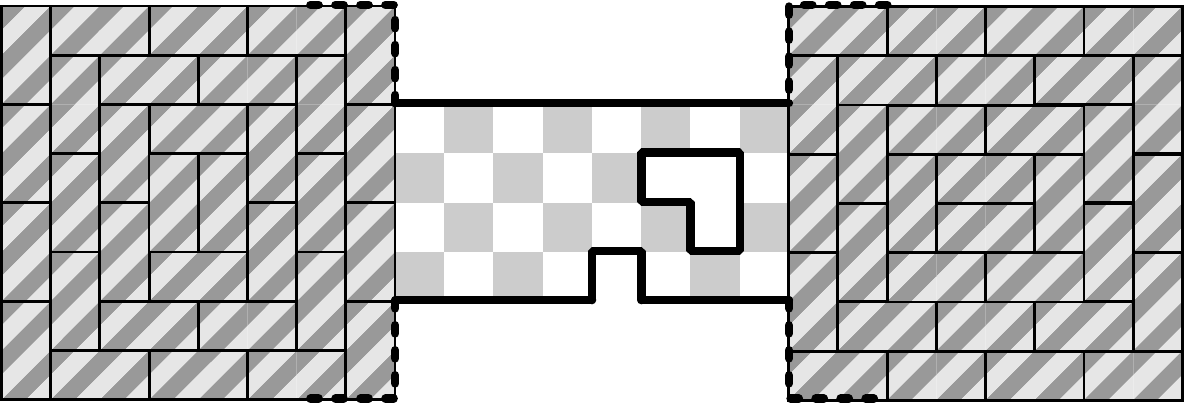%
    \label{fig:notgate}}%
  \subfigure[\texttt{F}$\longrightarrow$\texttt{T}.]{\includegraphics[scale=.5]{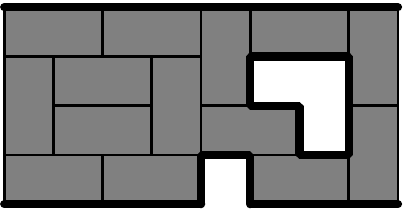}%
    \label{fig:notgateFT}}%
  \hspace{0.25in}%
  \subfigure[\texttt{T}$\longrightarrow$\texttt{F}.]{\includegraphics[scale=.5]{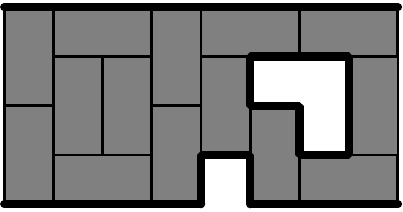}%
    \label{fig:notgateTF}}%
  \hspace{0.25in}%
  \def\svgscale{0.5}  %
  \subfigure[\texttt{F}$\longrightarrow$\texttt{F}.]{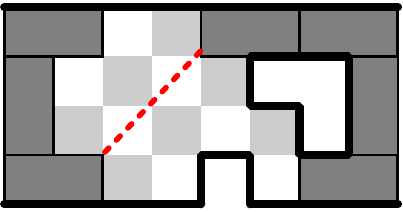%
    \label{fig:notgateFF}}%
  \hspace{0.25in}%
\def\svgscale{0.5}  %
  \subfigure[\texttt{T}$\longrightarrow$\texttt{T}.]{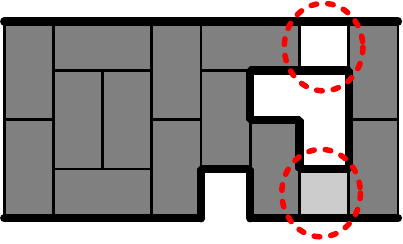%
    \label{fig:notgateTT}}%
  \caption{\NOT~gate can be covered if and only if the input differs
    from the output.  Numbered tiles indicate the (non-unique)
    ordering in which their placement is forced.  Red dotted lines
    indicate how domino coverings are impeded in \emph{(d)} and
    \emph{(e)}.}
  \label{fig:not}
\end{figure}

\paragraph{Wire gadget.} Wire transmits \true~or \false~through a
sequence of squares (see \fref{wireunit}).  A turn may incorporate a
\NOT~gate in order to maintain the same configuration (see
\fref{wireturn}).

\begin{figure}[h!]
  \centering %
  \scriptsize %
  \subfigure[Unit of wire, carrying \texttt{T}.]{\includegraphics[scale=.5]{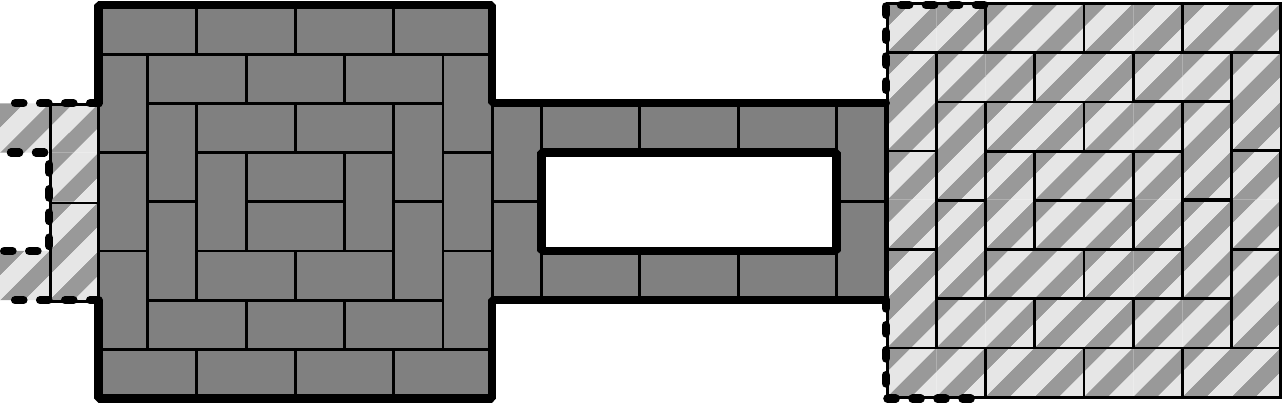}
    \label{fig:wireunit}}%

  \def\svgscale{0.5} %
  \subfigure[Wire branch and turn, carrying
  \texttt{F}.]{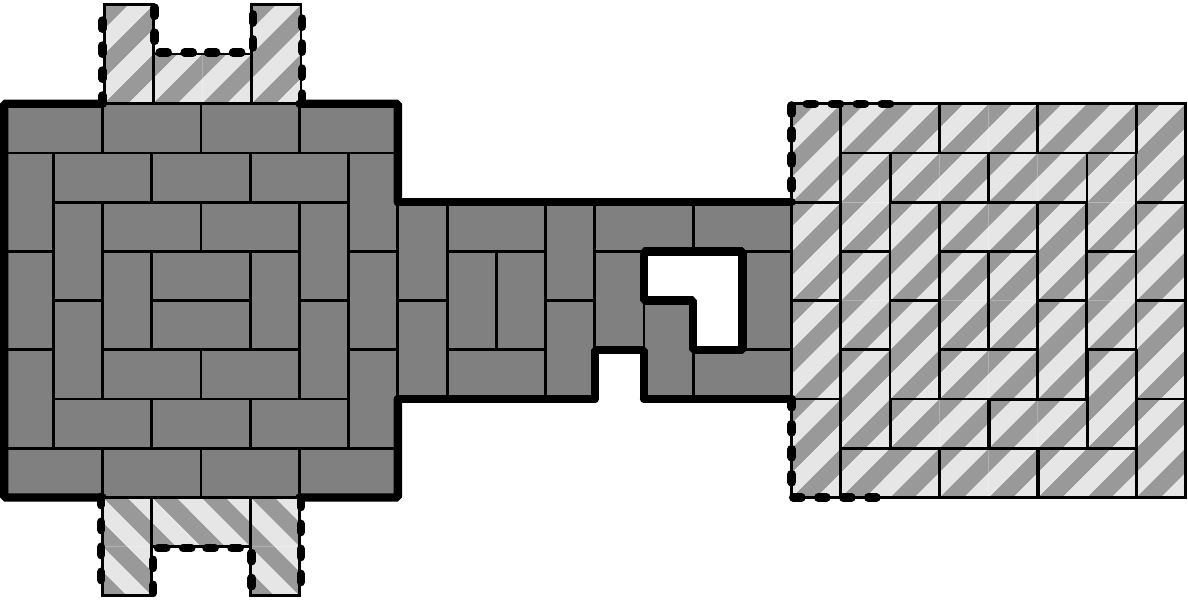
    \label{fig:wireturn}}%
  \caption{Wire gadget.}
  \label{fig:wire}
\end{figure}

\paragraph{AND gate.} The \AND~gate interfaces with two $8\times 8$
input squares, and one output square (see \fref{andgate}).  It can be
covered with dominoes if and only if the output value is the \AND~of
the inputs (see Figs.~\ref{fig:andgood} and~\ref{fig:andbad}).

\begin{figure}[h!]
  \centering %
  \scriptsize %
  \def\svgscale{0.5} %
  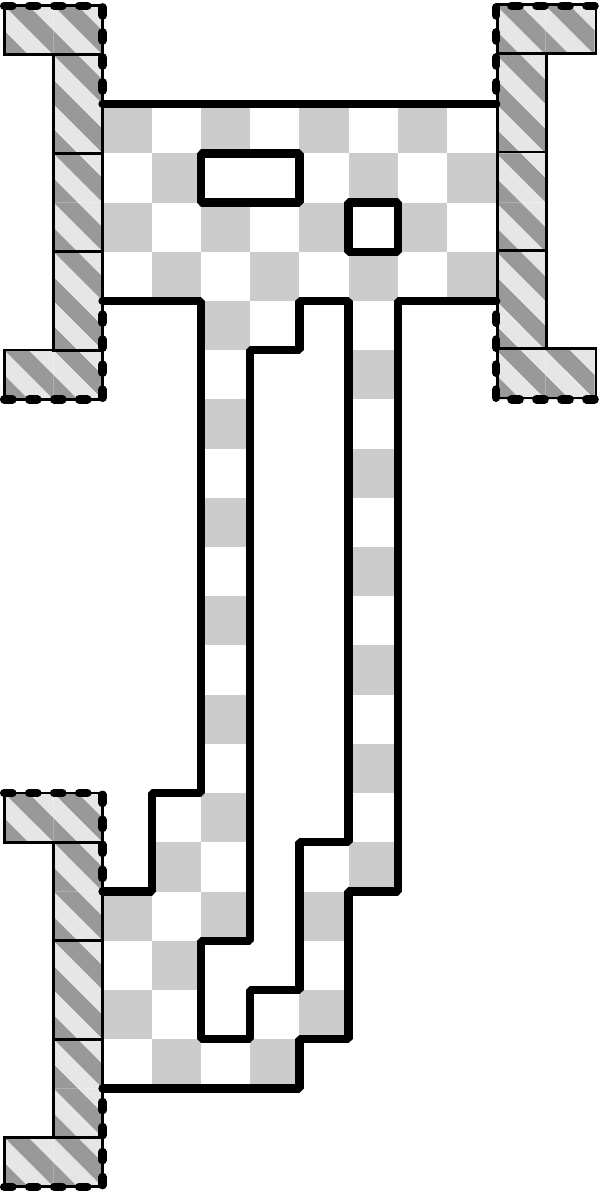
  \caption{\AND~gate with input (\texttt{T},\texttt{T}).}
  \label{fig:andgate}
\end{figure}

\begin{figure}[h!]
  \centering %
  \scriptsize %
  \def\svgscale{0.3} %
  \subfigure[\texttt{T}\texttt{T}$\longrightarrow$\texttt{T}.]{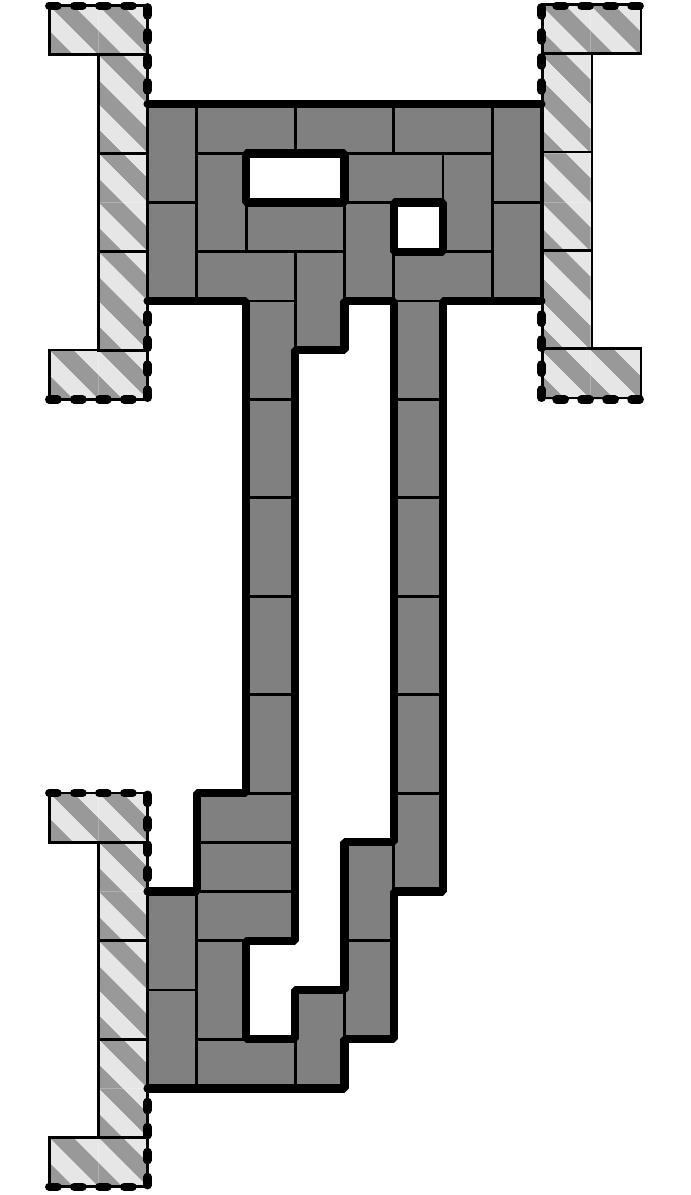%
    \label{fig:andTT}}%
  \hspace{0.2in} %
  \def\svgscale{0.3} %
  \subfigure[\texttt{T}\texttt{F}$\longrightarrow$\texttt{F}.]{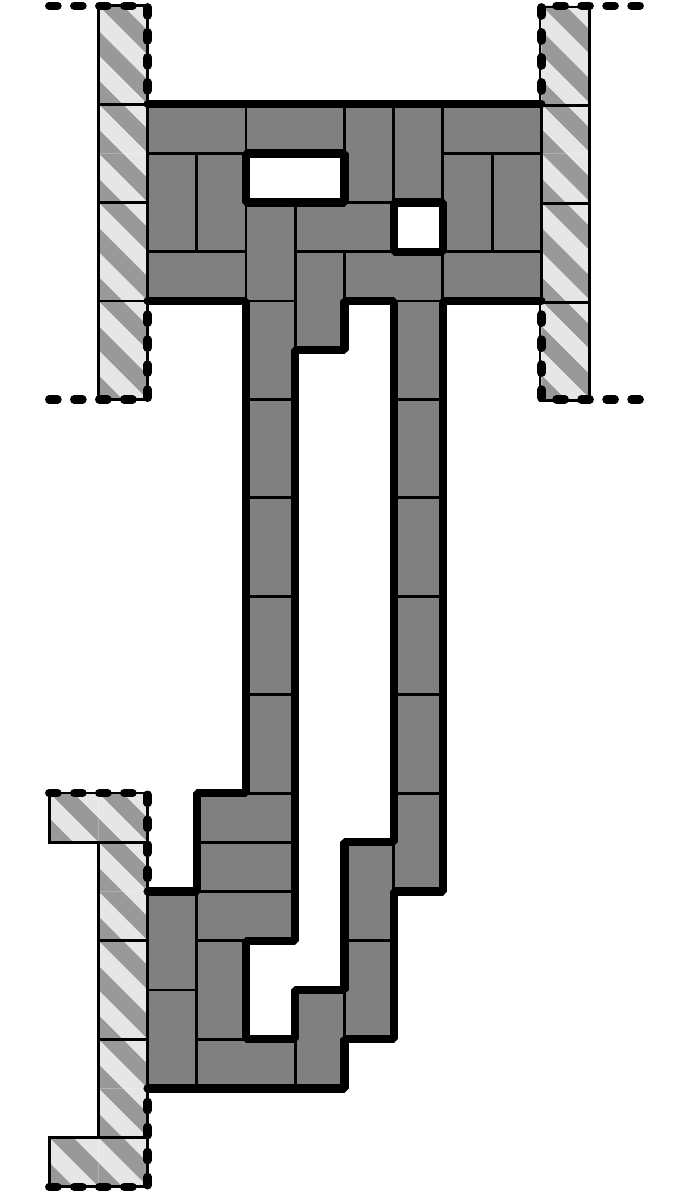%
    \label{fig:andTF}}%
  \hspace{0.2in} %
  \def\svgscale{0.3} %
  \subfigure[\texttt{F}\texttt{T}$\longrightarrow$\texttt{F}.]{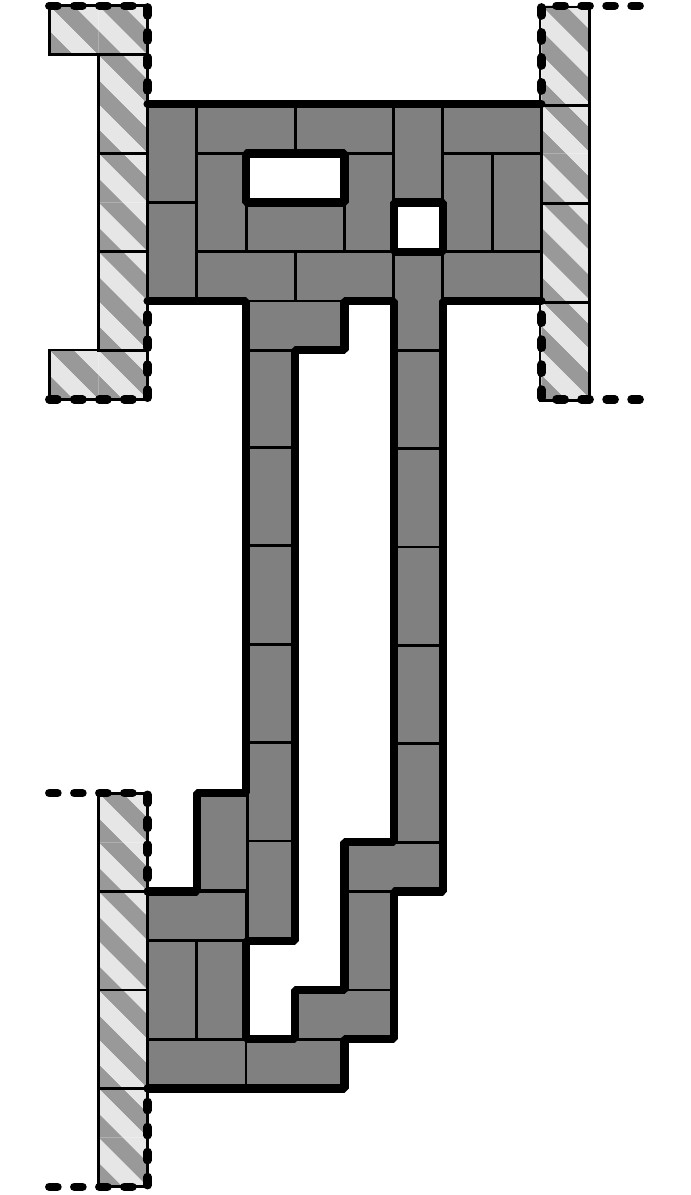%
    \label{fig:andFT}}%
  \hspace{0.2in} %
  \def\svgscale{0.3} %
  \subfigure[\texttt{F}\texttt{F}$\longrightarrow$\texttt{F}.]{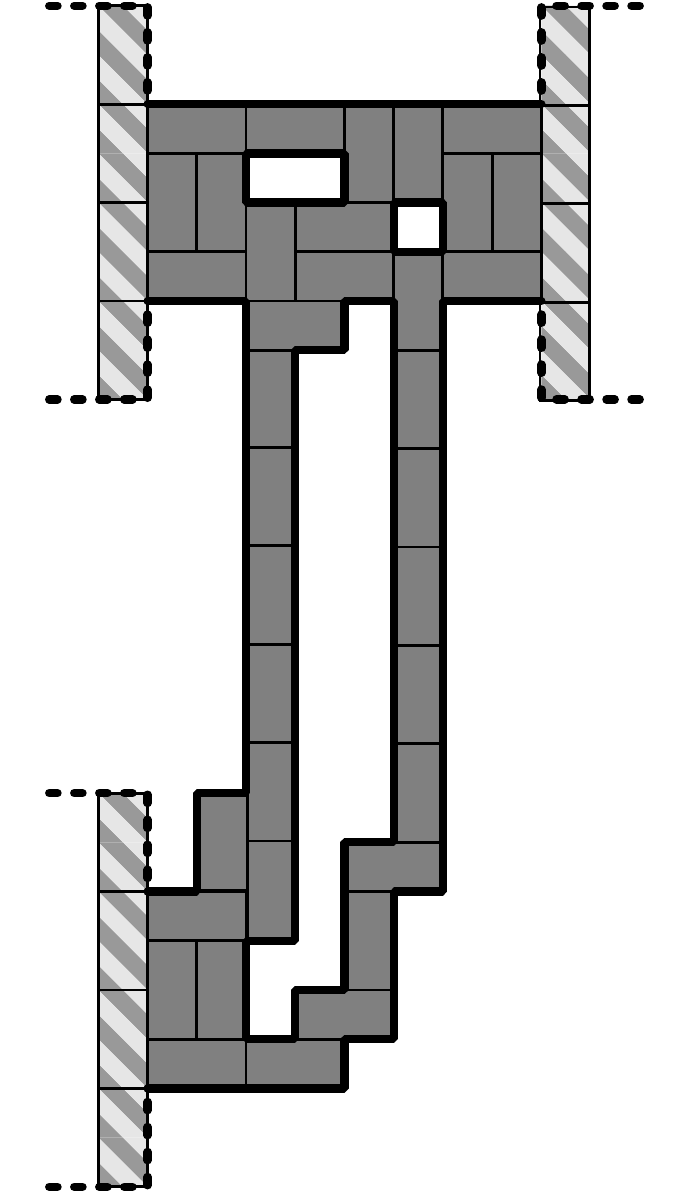%
    \label{fig:andFF}}%

  \caption{\AND~gate coverings.}
  \label{fig:andgood}
\end{figure}

\begin{figure}[h!]
  \centering %
  \scriptsize %
  \def\svgscale{0.4} %
  \subfigure[\texttt{*}\texttt{F}$\longrightarrow$\texttt{T}.]{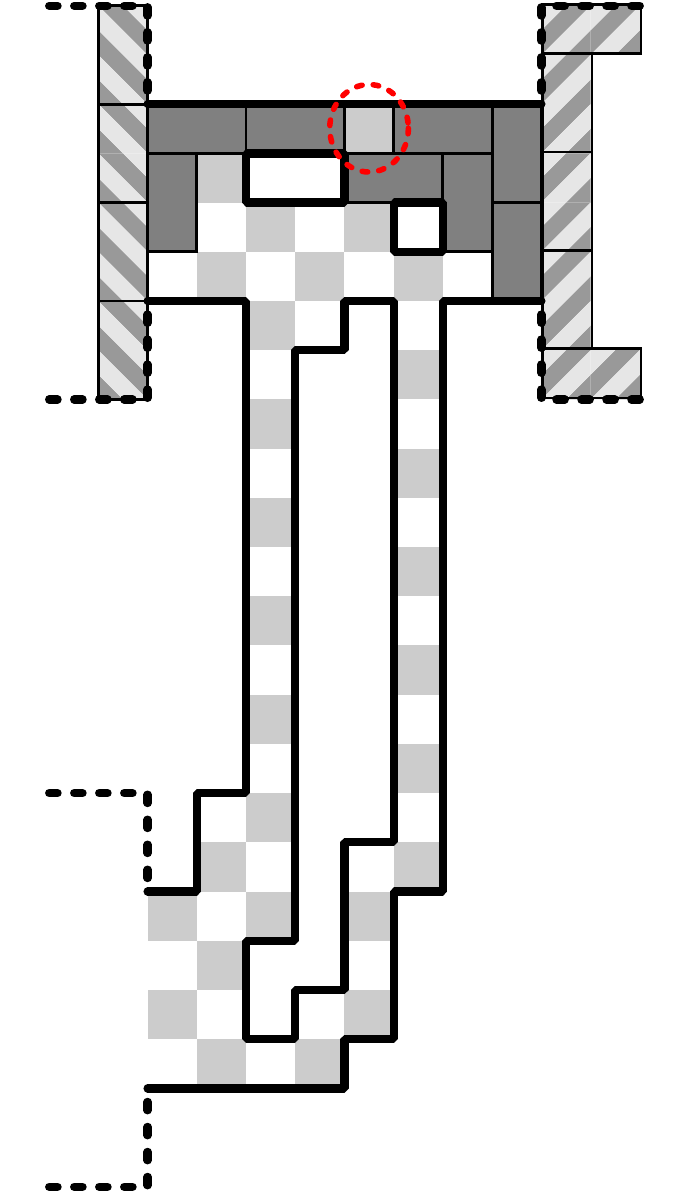%
    \label{fig:andsFT}}%
  \hspace{0.2in} %
  \def\svgscale{0.4} %
  \subfigure[\texttt{F}\texttt{*}$\longrightarrow$\texttt{T}.]{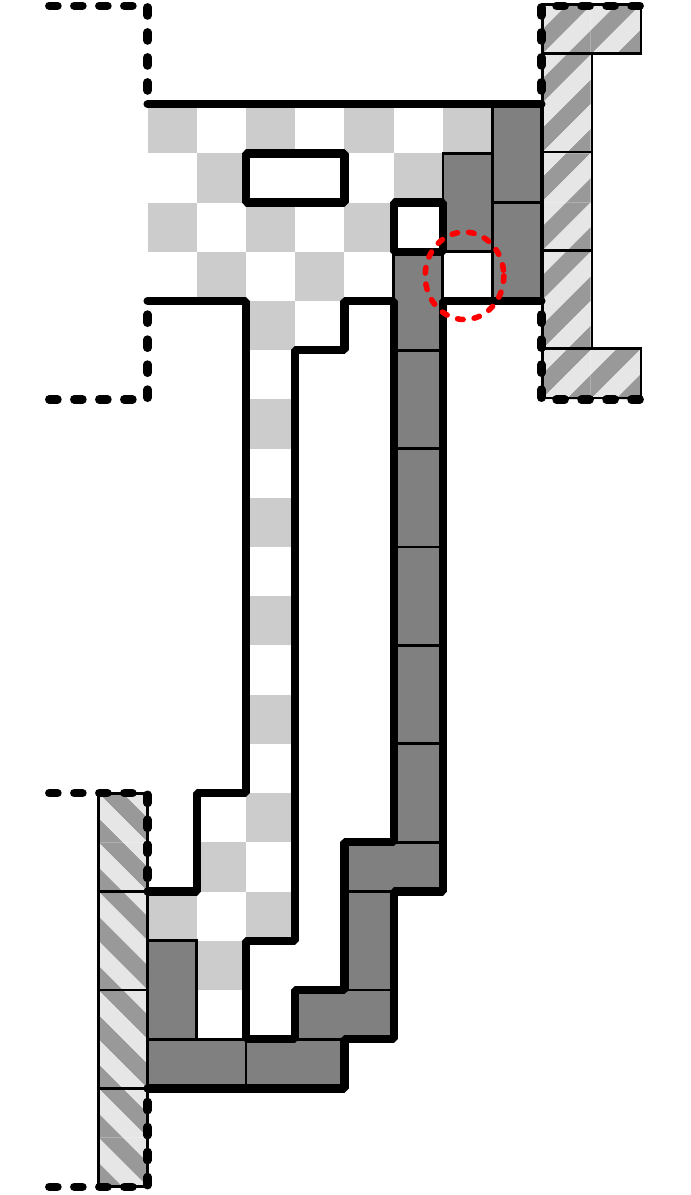%
    \label{fig:andFsT}}%
  \hspace{0.2in} %
  \def\svgscale{0.4} %
  \subfigure[\texttt{T}\texttt{T}$\longrightarrow$\texttt{F}.]{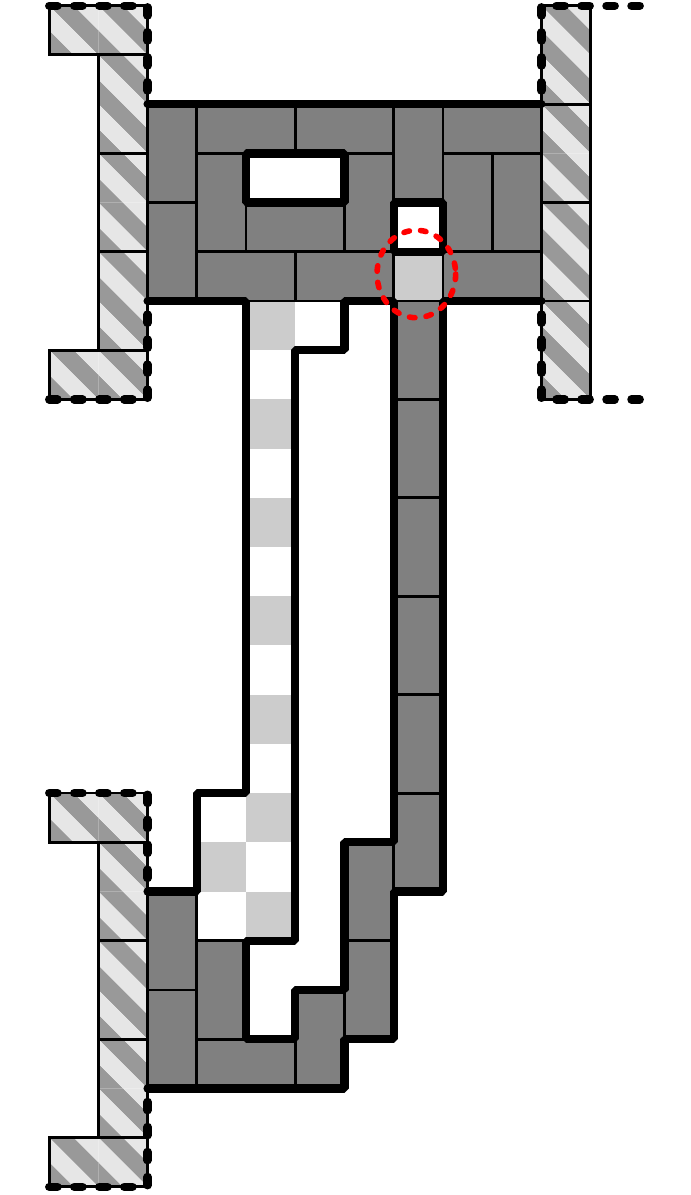%
    \label{fig:andTTF}}%
  \caption{Impossible \AND~gate coverings, where \texttt{*} denotes
    \texttt{F} or \texttt{T}.}
  \label{fig:andbad}
\end{figure}

\paragraph{Variable gadget.}  We use a vertical segment of wire.  The
variable gadget is set to \true~or \false~by choosing the appropriate
covering of one of its $8\times 8$ squares.  Its value (or its
negation) is propagated to clause gadgets via horizontal wire gadgets,
representing edges.


\paragraph{Clause gadget.} The clause gadget is a circuit for $\neg
(\bar a \wedge (\bar b \wedge \bar c))$, or the equivalent with fewer
inputs, ending in a configuration that can be covered if and only if
the output signal of the circuit is \true.  To satisfy the layout
requirements, the inputs to the clause are vertically translated by
wire (see \fref{clausestuff}).

\begin{figure}[!ht]
  \centering \def\svgscale{0.1} %
  \subfigure[]{%
    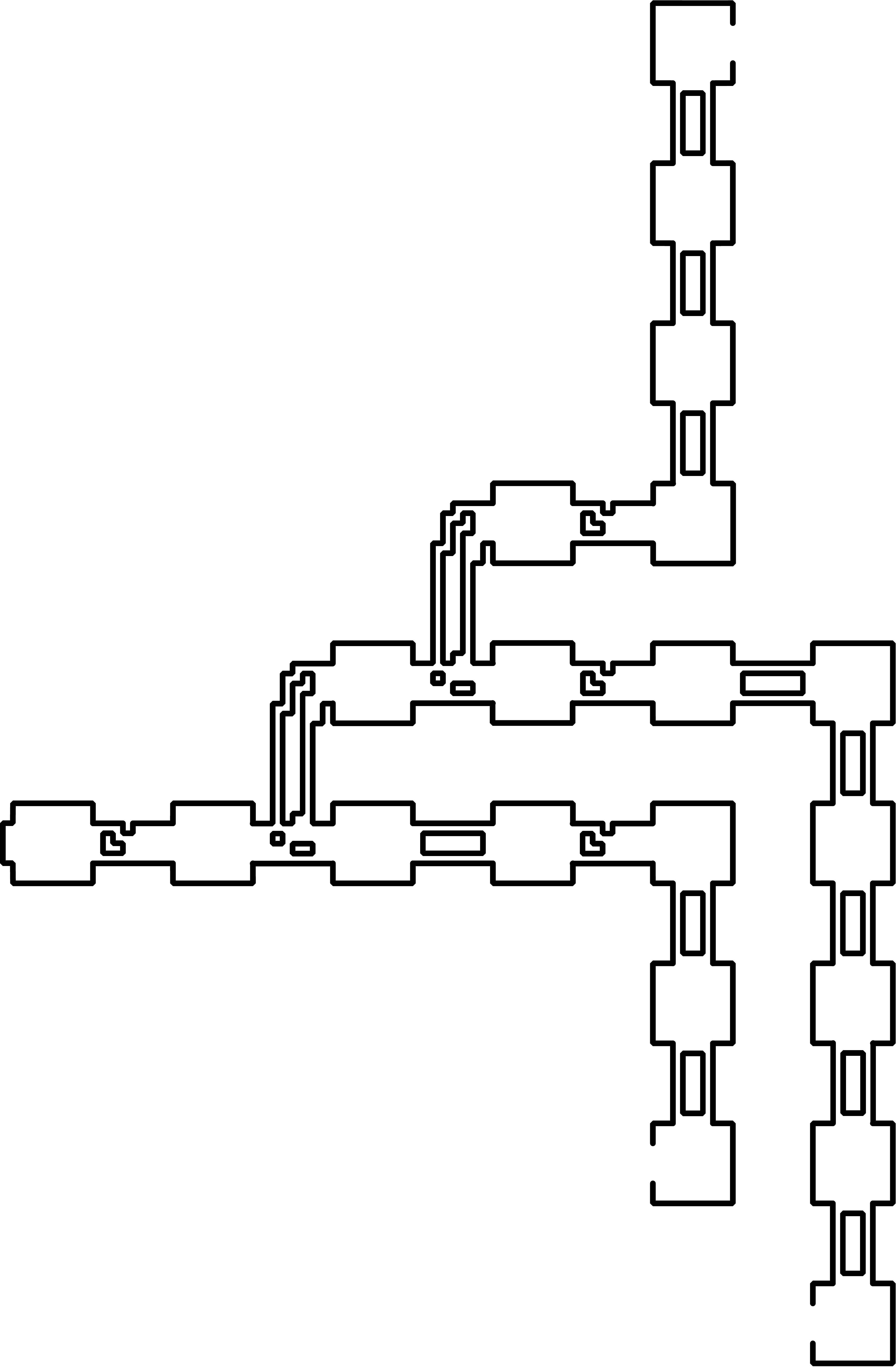%
    \label{fig:clause}}%
  \hspace{0.5in} %
  \def\svgscale{0.7}%
  \subfigure[End of clause.]{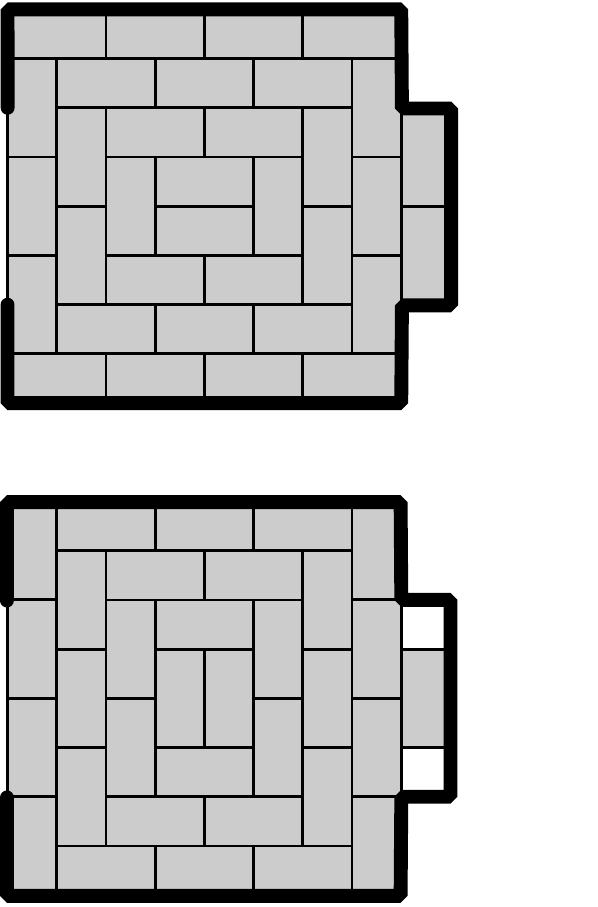%
     \label{fig:endclause}}%
   \caption{A three input clause gadget from the circuit $\neg (\bar a
     \wedge (\bar b \wedge \bar c))$.  Vertical wire translates
     horizontal inputs without changing the signal.  The end of the
     clause is coverable if and only if its signal is \true.}
  \label{fig:clausestuff}
\end{figure}

\section{Layout}
\label{sec:layout}

Let $G(\phi)$ be a planar embedding of the Boolean 3CNF formula
$\phi$, using Rosenstiehl and Tarjan's (\cite{RosenstiehlTarjan1986})
algorithm, so that each vertex is represented by a vertical line
segment, and each edge is represented by a horizontal line segment.
All parts lie on integer grid lines, inside of a
$\bigO(n)\times\bigO(n)$ grid, where $n = |U\cup C|$, and the
embedding is found in $\bigO(n)$ time.

There exists a constant $K$, which is the same for any planar 3CNF
formula, such that $G(\phi)$ can be scaled to fit on the $nK\times nK$
grid, and its parts replaced by the gadgets described above.  This
ensures that $R(\phi)$ has $\bigO(n^2)$ corners, and can also be
created in $\bigO(n)$ time.

The variable gadget is connected to edges by branches.  The layout of
$G(\phi)$ prevents conflicts between edges meeting the variable gadget
on the same side, while two edges can meet the left and right sides of
the variable gadget without interfering with each other.  The inputs
of the clause gadget are symmetric, so there are no conflicts when
connecting these to horizontal edges (see \fref{clause}).

\paragraph{Example.} The planar Boolean formula from Figure 1 in
\cite{Lichtenstein1982} gives the \DTC~instance in \fref{example}.

\begin{figure}[!ht]
  \centering
  \def\svgscale{0.15}
  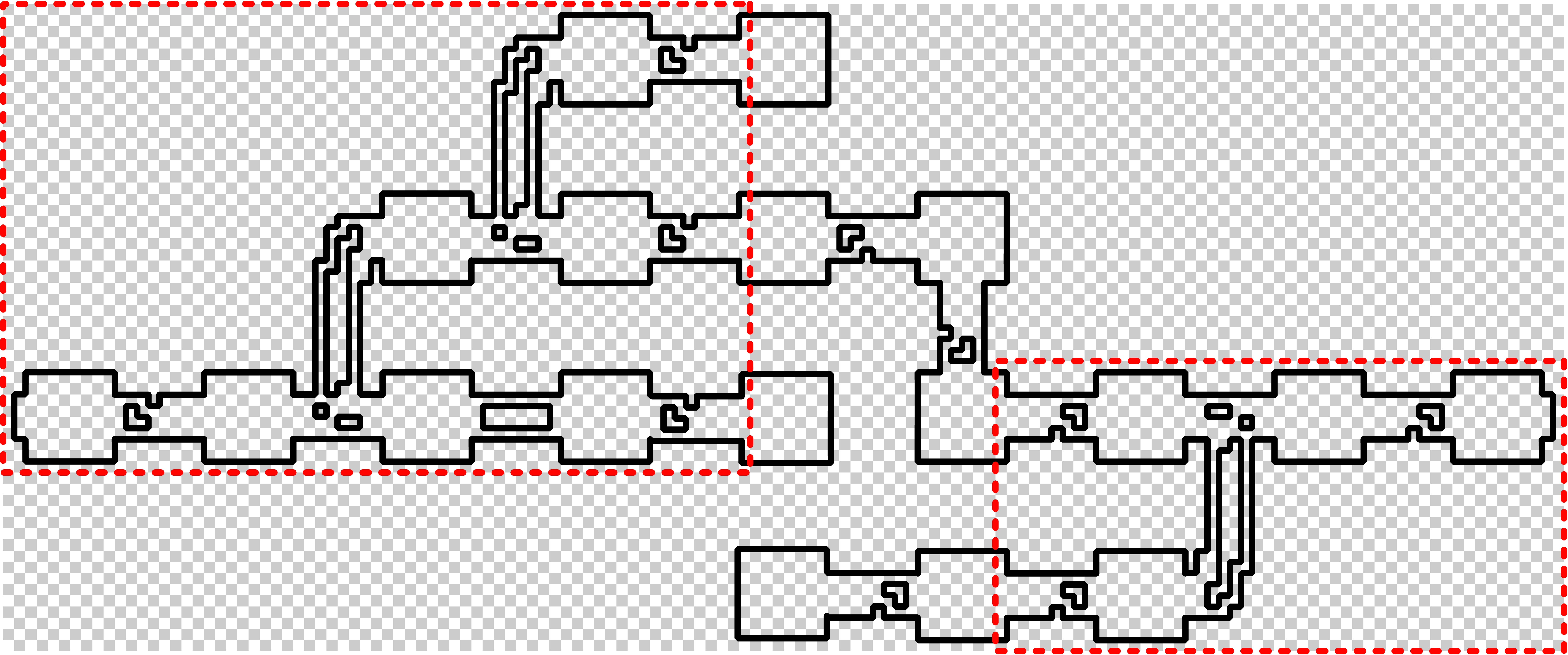
  \caption{An instance of \DTC~for the formula $(a\vee\bar b\vee
    c)\wedge (b\vee \bar d)$.}
  \label{fig:example}
\end{figure}

\section{\SAT~solver}
\label{sec:SAT}

The search for logical gates required fast testing of small
\DTC~instances.  We reduced \DTC~to \SAT~in order to use the
\SAT~solver, Minisat (\cite{EenSorensson2004}), and efficiently test
candidate regions connecting $8\times 8$ squares while satisfying the
conditions of the desired gate.  The \DTC~solver was also allowed to
make certain decisions about the region, rather than simply testing
regions generated by another program.

Our search algorithm requires the following inputs:
\begin{itemize}
\item an $r\times c$ rectangle of grid squares, partitioned into
  pairwise disjoint sets $K,X,A,C$; and,
\item a set of partial (good) coverings, $G$, and partial (bad)
  coverings, $B$, of $C$.
\end{itemize}

The output, $R$, is the region $A'\cup K$, where $A'\subseteq A$,
which satisfies the following constraints.
\begin{itemize}
\item[\emph{(g)}] There exist coverings of $R$ which form partial tatami domino
  coverings with each element of $G$.
\item[\emph{(b)}] There exists no covering of $R$ which forms a partial tatami
  domino covering with an element of $B$.
\end{itemize}

The outer loop of the search algorithm calls the \SAT-solver to find a
region that satisfies all elements of $G$, and avoids a list of
forbidden regions, which is initially empty. Upon finding such a
region, the inner loop checks whether the region satisfies any element
of $B$.  The search succeeds when \emph{(g)} and \emph{(b)} are
both satisfied, and fails if the outer loop's \SAT~instance has no
satisfying assignment.

The search space grows very quickly for several reasons, not least of
which is the fact that $2^{160}$ regions are possible within the
$20\times 8$ rectangle occupied by our \AND~gate (if corners are
allowed to meet one another).  In addition, the list of forbidden
regions, $L$, becomes too large for the \SAT~solver to handle
efficiently.

We used two heuristics on the inputs to obtain a feasible search.  The
first was searching for a smaller \AND~gate, which we modified to fit
the placement of the~$8\times 8$ squares.  The second was choosing
forbidden squares,~$X$, and required squares,~$K$, to reduce the
number of trivially useless regions that are tested.

\subsection{\DTC~as a Boolean formula}
The \SAT~instances used above are modifications of a formula which is
satisfiable if and only if a given region has a domino tatami
covering.

Let $R$ be the region we want to cover, and consider the graph whose
vertices are the grid squares of $R$, and whose edges connect vertices
of adjacent grid squares.  Let $H$ be the set of horizontal edges and
let $V$ be the set of vertical edges.  The variables of the
\SAT~instance are $H\cup V$, and those variables set to true in a
satisfying assignment are the dominoes in the covering. The clauses
are as follows, where $h,h'\in H$ and $v,v'\in V$.
\begin{enumerate}
\item Ensure a matching: For each pair of incident horizontal edges
  $(h,h')$, require the clause $\bar h \vee \bar h'$, and similarly
  for $(v,v')$, $(h,v)$.
\item Ensure the matching is perfect: For each set of edges
  $\set{h,h',v,v'}$, which are incident to a vertex, require the
  clause $h\vee h'\vee v\vee v'$.
\item Enforce the tatami restriction: For each 4-cycle, $hvh'v'$,
  require the clause $h\vee h'\vee v\vee v'$.
\end{enumerate}

\section{Variations and future work}
\label{sec:conclusion}




There are other locally constrained covering problems that are easily
represented as Boolean formulae.  Some of these are obviously
polynomial, such as monomino-domino tatami covering, but others may be
NP-complete.  \SAT-solvers can sometimes be used in such problems to create
elaborate gadgets, which may help find a hardness reduction.

An example problem, whose computational complexity is open, is Lozenge Tatami
Covering.  This problem is the decision about whether or not a finite
sub-grid of the triangular lattice can be covered with lozenges, such
that no more than $4$ lozenges meet at any point.  A structure similar
to that of tatami coverings occurs for this constraint (see
\fref{lozenge}).


\begin{figure}[!ht]
  \centering
\includegraphics[scale=0.5]{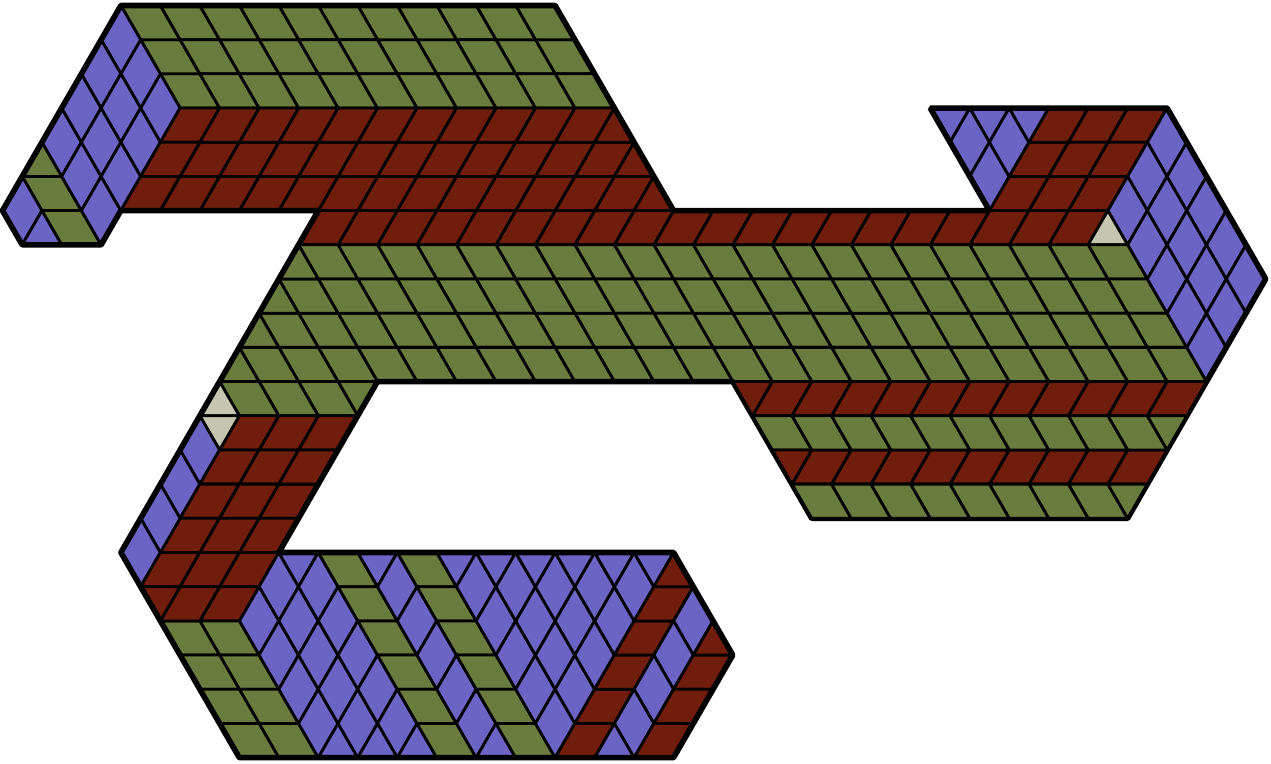}
  \caption{A triangle-lozenge tatami covering.}
  \label{fig:lozenge}
\end{figure}

Our main question about \DTC~is the complexity of the case where the
region is simply connected (no holes).  We believe that the problem is still
NP-complete, but a completely new approach will be required.

Secondarily, we are interested
in $H$-transverse perfect matchings for $H$ and $G$ other than $C_4$
and grid-graphs.  Are there other $H$-transverse perfect matchings of
interest which induce a tatami-like global structure in the containing
graph?

Another variant, mildly advocated by Don Knuth (personal communication), concerns inner corners of the 
coverings, such as occurs at the upper left in the letter \texttt{T} in
Figure \ref{fig:tatami}.  If corners such as these, where a $+$ occurs,
are forbidden but corners such as the upper right one in the \texttt{I} are allowed (a $\perp$ shape
or one of its rotations), then the nature of tatami coverings changes.  The complexity
of such coverings is unknown.

\subsubsection*{Acknowledgements}
Thanks to Bruce Kapron for useful conversations about this problem,
and Don Knuth for providing advice on a pre-print.  Part of this
research was conducted at the 9th McGill--INRIA Workshop on
Computational Geometry.

\bibliographystyle{abbrv}
\bibliography{bibliography}

\end{document}